\newtheorem{thm}{Theorem}
\newtheorem{lem}{Lemma}
\newtheorem{cor}{Corollary}
\newcommand{\refE}[1] {(\ref{eqn:#1})}
\newcommand{\refF}[1] {Fig.~\ref{fig:#1}}
\newcommand{\Csis}{Csisz\'ar}
\begin{document}

\title{A Derivation of the Source-Channel Error Exponent using Non-identical Product Distributions}


\author{Adri\`a Tauste Campo,
Gonzalo Vazquez-Vilar,
Albert Guill{\'e}n i F{\`a}bregas,\\
Tobias Koch and
Alfonso Martinez
\thanks{A. Tauste Campo, G. Vazquez-Vilar and A. Martinez are with the Department of Information and Communication Technologies, Universitat Pompeu Fabra, Barcelona, Spain (emails: \{atauste,gvazquez,alfonso.martinez\}@ieee.org).
A. Guill\'en i F\`abregas is with the Instituci\'o Catalana de Recerca i Estudis Avan\c{c}ats (ICREA), the Department of Information and Communication Technologies, Universitat Pompeu Fabra, Barcelona, Spain, and the Department of Engineering, University of Cambridge, CB2 1PZ Cambridge, United Kingdom (email: guillen@ieee.org).
T. Koch is with the Signal Theory and Communications Department, Universidad Carlos III de Madrid, 28911 Legan\'es, Spain (email: koch@tsc.uc3m.es).
}
\thanks{
This work has been funded in part by the European Research Council (ERC)
under grant agreement 259663; by the European Union under the 7th Framework Programme, grants FP7-PEOPLE-2009-IEF no. 252663, FP7-PEOPLE-2011-CIG no. 303633, FP7-PEOPLE-2012-CIG no. 333680, FP7-PEOPLE-2013-IEF no. 329837; and by the Spanish Ministry of Economy and Competitiveness under grants CSD2008-00010, TEC2009-14504-C02-01, TEC2012-38800-C03-01, TEC2012-38800-C03-03 and RYC-2011-08150. A. Tauste Campo acknowledges funding from an EPSRC (Engineering and Physical Sciences Research Council, UK) Doctoral Prize Award.}
\thanks{This work was presented in part at the 46th Conference on Information Sciences and Systems, Princeton, NJ, March 21--23, 2012 and at the IEEE Symposium on Information Theory, Cambridge, MA, July 1--6, 2012.}}
\maketitle 

\thispagestyle{empty}

\begin{abstract}
This paper studies the random-coding exponent of joint source-channel coding for a scheme where source messages are assigned to disjoint subsets (referred to as classes), and codewords are independently generated according to a distribution that depends on the class index of the source message. For discrete memoryless systems, two optimally chosen classes and product distributions are found to be sufficient to attain the sphere-packing exponent in
those cases where it is tight.
\end{abstract}

\section{Introduction}\label{sec:intro}

Jointly designed source-channel codes may achieve a lower error probability than separate source-channel coding \cite{Csis80}. In fact, the error exponent of joint design may be up to twice that of the concatenation of source and channel codes \cite{Zhong06}. The best exponent in this setting is due to {\Csis} \cite{Csis80}, who used a construction where codewords are drawn at random from a set of sequences with a composition that depends on the source message. He also showed that the exponent coincides with an upper bound, the sphere-packing exponent, in a certain rate region.

Gallager \cite[Prob.~5.16]{Gall68} derived a random-coding exponent for an  ensemble whose codewords are drawn according to a fixed product distribution, independent of the source message. 
This method 
yields a simple derivation of the channel coding exponent in
discrete memoryless channels \cite[Th. 5.6.2]{Gall68}.
However, the straightforward application to source-channel coding gives a (generally)
weaker achievable exponent than \Csis's method, although this difference is typically small for the optimum choice of input distributions~\cite{Zhong06}.

In this paper, 
we study a code ensemble for which codewords associated to different source messages are generated according to different product distributions. We derive a new random-coding bound on the error probability for this ensemble and show that its exponent 
attains the sphere-packing exponent in the cases where it is tight. 
We find that either one or two different distributions suffice in the optimum ensemble.


The paper is structured as follows. 
In Section \ref{sec:intro.notation} we introduce the system model
and several definitions used throughout the paper.
Section \ref{Section:Previous_work} reviews related previous work on
source-channel coding. Section \ref{Section:Random_coding_bound},
the main section of the paper, presents the new random-coding
bound and its error exponent.
Finally, we conclude in Section \ref{Section:conclusions} with
some final remarks. 
Proofs of the results can be found in the appendices.

\section{System Model and Definitions}\label{sec:intro.notation}

An encoder maps a source message $\v$ to a length-$n$ codeword $\x(\v)$, which is then transmitted over the channel and decoded as $\hat{\v}$ at the receiver upon observation of the output $\y$.
The source is characterized by a distribution $\Ps(\v)=\prod_{j=1}^{k}\ps(v_j)$, $\v=(v_1,\dots, v_k)\in \Va^k$, where $\Va$ is a finite alphabet. 
Since $P$ fully describes the source, we shall sometimes abuse notation and refer to $P$ as the source.
The channel law is given by a conditional probability distribution $\Pch(\y|\x)=\prod_{j=1}^{n}\pch(y_j|x_j)$, $\x=(x_1,\dots, x_n)\in \Xa^n$,
$\y=(y_1,\dots, y_n)\in \Ya^n$, where $\Xa$ and $\Ya$ denote the input and output alphabet, respectively.
While $\Xa$ and $\Ya$ are assumed discrete for ease of exposition, our achievability results  extend in a natural way to continuous alphabets.

Based on the output $\y$, the decoder selects a source message $\hat{\v}$ according to the maximum a posteriori (MAP) criterion,
\begin{align}\label{eqn:MAPdec}
  \hat{\v} = \arg \max_{\v} \Ps(\v)\Pch\bigl(\y | \x(\v)\bigr).
\end{align}
Here and throughout the paper, we avoid explicitly writing
the set in optimizations and summations if they are performed
over the entire set. Also, where unambiguous, we shall write
$\x$ instead of $\x(\v)$.
We study the average error probability $\epsilon$, defined as
\begin{equation}\label{error_joint}
 \epsilon\triangleq \Pr\{\hat{\V}\neq \V \},
 \end{equation}
where capital letters are used 
to denote random variables. 
In addition to bounds on the average error probability $\epsilon$ for finite values of $k$ and $n$, we are interested in its exponential decay. Consider a sequence of sources with length $k=1,2,\ldots$ and a corresponding
sequence of codes of length $n=n_1,n_2,\ldots$
Assume that the ratio $\frac{k}{n}$ converges to some quantity 
\begin{equation}\label{eqn:tdef}
  t\triangleq\lim_{k\to\infty} \frac{k}{n},
\end{equation}
referred to as \emph{transmission rate}.
An exponent $E(P,W,t) > 0$ is to said to be achievable if there exists a sequence
of codes whose error probabilities $\epsilon$ satisfy
\begin{equation}\label{jscc_exponent_def}
\epsilon \leq e^{-n E(P,W,t)+o(n)}, 
\end{equation}
where $o(n)$ is a sequence such that $\lim_{n\to\infty} o(n)/n=0$. The reliability function $E_{\text{J}}(P,W,t)$ is defined as the supremum of
all achievable error exponents; we sometimes shorten it to $E_{\text{J}}$.

We denote Gallager's source and channel functions as
\begin{align}\label{Gallager_exponent_PX_2}
 \Es(\rho, \ps) &\triangleq
    \log \left(\sum_{v} \ps(v)^{\frac{1}{1+\rho}} \right)^{1+\rho},\\
  \Eo(\rho,\pch,\px) &\triangleq
   -\log \sum_{y}  \left( \sum_{x} \px(x)
       \pch(y|x)^{\frac{1}{1	+\rho}}\right)^{1+\rho},
\end{align}
respectively.

Sometimes, we are interested in the error exponent maximized only over a subset of probability distributions on $\Xc$. Let $\Qc$ be a non-empty proper subset of probability distributions on $\Xc$. With some abuse of notation we define
\begin{equation}\label{Gallager_exponent_optPx}
  \Eo(\rho, \pch, \Qc) \triangleq \max_{\px \in \Qc} \Eo(\rho, \pch, \px).
\end{equation}
When the optimization is done over the set of all
probability distributions on $\Xc$ we simply write
$\Eo(\rho, \pch) \triangleq \max_{\px} \Eo(\rho, \pch, \px)$.

We denote by $\barEo(\rho, \pch, \Qc)$ the concave hull of $\Eo(\rho, \pch, \Qc)$,
defined pointwise as the supremum  over all convex combinations
of any two values of the function $\Eo(\rho,\pch,\Qc)$~\cite[p. 36]{convex-rockafellar},
i.e.
\begin{equation}\label{eqn:barEodef}
  \bar\Eo(\rho, \pch, \Qc) \triangleq
   \max_{\substack{\rho_1,\rho_2,\lambda\in[0,1]:\\ \lambda\rho_1+(1-\lambda)\rho_2=\rho}}
   \Bigl\{ \lambda \Eo(\rho_1, \pch, \Qc) + (1-\lambda) \Eo(\rho_2, \pch, \Qc) \Bigr\}.
\end{equation}
Similarly, we write $\barEo(\rho, \pch)$ to denote the concave hull of
$\Eo(\rho, \pch)$.

\section{Previous Work: Gallager's and {\Csis}'s Exponents}\label{Section:Previous_work}

For source coding (i.e., when $W$ is the channel law of a noiseless channel),
the reliability function of a source $\ps$ at rate $R$, denoted by $e(R, \ps)$, is given by  \cite{Jel68}
\begin{align}
e(R, \ps) &= \sup_{\rho\geq 0} \bigl\{\rho R - \Es(\rho, \ps)\bigr\}.\label{source_exponent_dual}
\end{align}



For channel coding (i.e., when $P$ is the uniform distribution), the
reliability function of a channel $\pch$ at rate $R$, denoted by $E(R, \pch)$,
is bounded as~\cite{Gall68}
\begin{equation}
\Er(R, \pch) \leq E(R, \pch)\leq \Esp(R, \pch),
\end{equation}
where  $\Er(R, \pch)$ is the random-coding exponent and $\Esp(R, \pch)$ is the sphere-packing exponent, respectively, given by
 \begin{align}
&\Er(R, \pch) 
    \triangleq \max_{\rho\in[0,1]} \,\Bigl\{\Eo(\rho, \pch)-\rho R \Bigr\}, \label{channel_exponent} \\
&\Esp(R, \pch) 
                                   \triangleq \sup_{\rho\geq 0}\, \Bigl\{ \Eo(\rho, \pch)-\rho R \Bigr\}. \label{spherepacking_exponent}
\end{align}

For source-channel coding Gallager used a random-coding argument to derive an upper bound on the average error probability by drawing the codewords independently of the source messages according to a given product distribution
$\Px(\x)= \prod_{j=1}^{n} \px (x_j)$.
He found the achievable exponent~\cite[Prob. 5.16]{Gall68}
\begin{equation}\label{Gallager_exponent_Px}
\max_{\rho\in[0,1]}\, \Bigl\{ \Eo(\rho, \pch, \px)-t\Es(\rho, \ps)\Bigr\},
\end{equation}
which becomes, upon maximizing over $\px$,
\begin{equation}\label{Gallager_exponent}
\EJG(P,W,t) \triangleq \max_{\rho\in[0,1]}\, \Bigl\{ \Eo(\rho, \pch)-t\Es(\rho, \ps)\Bigr\}.
\end{equation}

{\Csis} refined this result using the method
of types \cite{Csis80}.
By using a partition of the message set into source-type classes and considering fixed-composition codes that map messages within a source type onto sequences within a channel-input type,
%
he found an achievable exponent
\begin{IEEEeqnarray}{lCcCl}
\EJCs(P,W,t) 
& \triangleq &  \min_{t H(V)\leq R\leq R_{\Vcal}} \biggl\{
t e\left(\frac{R}{t},\ps\right)+\Er(R, \pch)\biggr\},\IEEEeqnarraynumspace \label{Csiszar_exponent}
\end{IEEEeqnarray}
where $R_{\Vcal} \triangleq t \log |\Vcal|$. A convenient alternative representation of $\EJCs$ was obtained by Zhong \emph{et al.} \cite{Zhong06}
via Fenchel's duality theorem \cite[Thm. 31.1]{convex-rockafellar}:
\begin{equation}\label{Concave_Hull_exponent}
\EJCs(P,W,t) = \max_{\rho\in[0,1]} \bigl\{\barEo(\rho, \pch) - t \Es(\rho, \ps)\bigr\}.
\end{equation}
Since $\barEo(\rho, \pch) \geq \Eo(\rho, \pch)$, it follows from \eqref{Concave_Hull_exponent} and \eqref{Gallager_exponent} that $\EJCs \geq \EJG$ in general. Nonetheless, the finite-length bound implied by the exponent $\EJCs$ in \cite{Csis80} might be worse than the one in \cite[Prob. 5.16]{Gall68} due to the worse subexponential terms, which may dominate for finite values of $k$ and $n$.

To validate the optimality of $\EJCs$,
{\Csis} derived a sphere-packing  bound on the exponent \cite[Lemma 2]{Csis80}, 
\begin{IEEEeqnarray}{lCl}
\EJSp(P,W,t) \triangleq \min_{t H(V)\leq R\leq R_{\Vcal}} \biggl\{
t e\left(\frac{R}{t},\ps\right)+\Esp(R, \pch)\biggr\}.\IEEEeqnarraynumspace \label{Csiszar_sp_exponent}
\end{IEEEeqnarray}
When the minimum on the right-hand side (RHS) 
of \eqref{Csiszar_sp_exponent} is attained  for a value of $R$ such that $\Esp(R, \pch)=\Er(R, \pch)$, the upper bound \eqref{Csiszar_sp_exponent} coincides with the lower bound \eqref{Csiszar_exponent} and, hence, $\EJCs=\EJ$. This is the case for values of $R$ above the critical rate of the channel $\Rcr$ \cite{Csis80}.

\section{An Achievable Exponent for Joint Source-Channel Coding}
\label{Section:Random_coding_bound}

In this section, we analyze the error probability of random-coding ensembles where the codeword distribution depends on the source message. 
We find that ensembles generated with a pair of product distributions $\bigl\{\Px_{1},\Px_{2}\bigr\}$ may attain a better error exponent than Gallager's exponent~\eqref{Gallager_exponent_Px} for $\px$ being equal to either $\px_{1}$ or $\px_{2}$. Moreover, optimizing over pairs of distributions this ensemble recovers the exponent $\EJSp$
in those cases where it is tight.

\subsection{Main Results}
\label{Section:Results}

Let us first define a partition of the source-message set
$\Va^k$ into $N_k$ disjoint  subsets $\Acal_k^{(i)}$,  $i=1,\ldots,N_k$, such that $\bigcup_{i=1}^{N_k} \Acal_k^{(i)} = \Va^k$. We refer to these subsets as \emph{classes}. 
For each source message $\v$ in the set $\Acal_k^{(i)}$, we randomly and independently generate codewords $\x(\v) \in \Xa^n$ according to a channel-input  product distribution $\Px_{i}(\x)= \prod_{j=1}^{n} \px_i (x_j)$.
This definition is a generalization of \Csis's  partition in \cite{Csis80}
where each subset corresponds to a source-type class. Since the number of
source-type classes is a polynomial function of $k$~\cite{Csis98}, it
follows that the number of classes $N_k$ considered in \cite{Csis80}
is also polynomial in $k$.

The next result extends \cite[Th. 5.6.2]{Gall68} to codebook ensembles
where codewords are independently but not necessarily identically distributed.

\begin{thm}\label{Main_theorem}
For a given partition $\Acal_k^{(i)}$,  $i=1,\ldots,N_k$, and
associated distributions $\px_{i}$, $i=1,\ldots,N_k$, there exists
a codebook satisfying
\begin{IEEEeqnarray}{lCl}
\epsilon &\leq &
               \hfunc \sum_{i=1}^{N_{k}}
               \exp\biggl(-\max_{\rho_{i}\in[0,1]}\Bigl\{\Eo\bigl(\rho_i,\Pch,\Px_{i}\bigr)  -\EsI(\rho_i,\Ps)\Bigr\}\biggr), \label{main_error_bound}
\end{IEEEeqnarray}
where $\hfunc \triangleq \frac{3N_k-1}{2}$ and 
\begin{align} \label{eqn:def_Eis}
  \EsI(\rho,\Ps) \triangleq \log \left( \sum_{\v \in \Acal_k^{(i)}} \Ps(\v)^{\frac{1}{1+\rho}} \right)^{1+\rho}.
\end{align}

\end{thm}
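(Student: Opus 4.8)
The plan is to begin from the MAP rule \refE{MAPdec} and write the average error probability as $\epsilon=\sum_{i=1}^{N_k}\sum_{\v\in\Acal_k^{(i)}}\Ps(\v)\,\Pr\{\hat{\V}\neq\v\mid\V=\v\}$, conditioning on the class $i$ of the transmitted message. Given that $\v\in\Acal_k^{(i)}$ is sent, a decoding error requires some competitor $\v'\neq\v$ with $\Ps(\v')\Pch(\y|\x(\v'))\geq\Ps(\v)\Pch(\y|\x(\v))$. First I would partition the competitors according to their class, separating the \emph{same-class} competitors $\v'\in\Acal_k^{(i)}$, whose codewords share the product distribution $\Px_i$ of the transmitted codeword, from the \emph{cross-class} competitors $\v'\in\Acal_k^{(j)}$, $j\neq i$, whose codewords are drawn from a different distribution $\Px_j$.

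For the same-class competitors I would apply Gallager's $\rho$-trick as in \cite[Th.~5.6.2]{Gall68}: upper bound the pairwise-error indicator by $\bigl(\Ps(\v')\Pch(\y|\x(\v'))/\Ps(\v)\Pch(\y|\x(\v))\bigr)^{\frac{1}{1+\rho_i}}$, bound the union over competitors by raising their sum to a power $\rho_i\in[0,1]$, and average over the ensemble. Since the transmitted and competing codewords are independent and identically distributed according to $\Px_i$, the average over $\y$ factorizes across the $n$ channel uses and yields $\Eo(\rho_i,\Pch,\Px_i)$, while combining the factor $\Ps(\v)^{\frac{1}{1+\rho_i}}$ from the transmitted message with the restricted competitor sum gives $\bigl(\sum_{\v\in\Acal_k^{(i)}}\Ps(\v)^{\frac{1}{1+\rho_i}}\bigr)^{1+\rho_i}=\exp\bigl(\EsI(\rho_i,\Ps)\bigr)$, exactly as in \refE{def_Eis}. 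Optimizing over $\rho_i\in[0,1]$ then produces the summand $\exp\bigl(-\max_{\rho_i}\{\Eo(\rho_i,\Pch,\Px_i)-\EsI(\rho_i,\Ps)\}\bigr)$.

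The main obstacle is the cross-class error events. A single Gallager bound fails here for two reasons: averaging the channel term no longer factorizes cleanly, because the output law induced by $\Px_i$ becomes entangled with the competitor distribution $\Px_j$; and, more fundamentally, one application of the union/$\rho$ bound would impose a \emph{common} exponent on all classes, whereas \eqref{main_error_bound} demands an independent maximization $\max_{\rho_i}$ in each summand. To decouple the classes I would use a threshold decomposition: for a threshold $\gamma(\y)$ the error indicator is bounded by $\mathbf{1}\{\Ps(\v)\Pch(\y|\x(\v))\leq\gamma(\y)\}+\sum_{j}\mathbf{1}\{\exists\,\v'\in\Acal_k^{(j)}\setminus\{\v\}:\Ps(\v')\Pch(\y|\x(\v'))\geq\gamma(\y)\}$. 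The first indicator involves only the transmitted class $i$, while each term of the sum isolates a single competing class $j$; a Chernoff bound on the former and a union-plus-Chernoff bound on the latter let every class be treated with its own distribution and its own free parameter, so that, after optimizing these parameters separately, the transmitted class and each competing class each contribute an exponent of the form $\max_{\rho}\{\Eo(\rho,\Pch,\Px_i)-\EsI(\rho,\Ps)\}$ with the appropriate class index. The threshold $\gamma(\y)$ must be chosen to reproduce the correct $\Eo$ for the transmitted class and for every competing class at once.

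Finally I would collect the contributions---the same-class term (one per transmitted class), the ``transmitted-atypical'' term (one per class), and the ``competitor-crosses'' terms (one per competing class)---and bound each by the corresponding maximized exponential $\exp(-\max_{\rho_i}\{\cdots\})$. Counting how many times each class's exponential is charged across these sub-events then produces the common polynomial prefactor $\hfunc=\frac{3N_k-1}{2}$, and since $N_k$ is polynomial in $k$ this factor is subexponential. I expect the bookkeeping of the sub-events, together with the choice of a single threshold $\gamma(\y)$ that yields the right $\Eo$ for the transmitted class and for all competing classes simultaneously, to be the most delicate part of the argument; the factorizations for the product channel and for the class-restricted source sums, by contrast, are routine once this decomposition is in place.
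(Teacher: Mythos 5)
Your decomposition into same-class and cross-class competitors is the right starting point, and you correctly identify the central obstacle: a single union/$\rho$ bound forces a common exponent on all classes, whereas \refE{main_error_bound} requires an independent $\max_{\rho_i}$ per class. But the threshold device you propose to resolve it does not, as sketched, close the gap. Write out the ``competitor-crosses'' term for class $j$ after the union bound and a Chernoff bound with parameter $s_j$: the expectation is still taken over $\Y$ whose law is induced by the \emph{transmitted} class's distribution $\Px_i$, so you obtain a sum over $\y$ of $\bigl(\sum_{\x}\Px_{i}(\x)\Pch(\y|\x)\bigr)\,\gamma(\y)^{-s_j}\sum_{\bar{\x}}\Px_{j}(\bar{\x})\Pch(\y|\bar{\x})^{s_j}$ (times class-restricted source sums). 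This entangles the output distribution induced by $\px_i$ with the competitor statistics under $\px_j$, and it is not of the form $\exp\bigl(-\Eo(\rho_j,\Pch,\Px_j)+\Es^{(j)}(\rho_j,\Ps)\bigr)$ for any choice of $s_j$. Choosing $\gamma(\y)$ to fix this for one pair $(i,j)$ pins down $\gamma(\y)$ in a way that breaks it for the transmitted-atypical term and for every other competing class; you flag exactly this simultaneous-choice problem as ``the most delicate part,'' but it is not a bookkeeping issue --- it is the missing idea, and with a single threshold (or even pair-dependent thresholds) it does not go through in general.

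The paper resolves the decoupling differently. Starting from the RCU-type bound, it applies Markov's inequality with a per-class parameter $s_j$ inside the $\min\{1,\cdot\}$, then uses $\min\{1,A+B\}\leq A^{\rho}+B^{\rho'}$ with \emph{pair-dependent} exponents $\rho_{ij}$, and --- this is the key step --- sets $\rho_{ij}=\frac{1-s_i}{s_j}$ with $s_i,s_j\in[\tfrac12,1]$. With this choice the $(i,j)$ cross term factors exactly as $\sum_{\y}G_i(\y)^{s_i}G_j(\y)^{1-s_i}$, a geometric mean of two purely intra-class quantities; H\"older's inequality then separates the sum over $\y$, and the AM--GM inequality bounds $\bigl(\sum_{\y}G_i\bigr)^{s_i}\bigl(\sum_{\y}G_j\bigr)^{1-s_i}$ by $\sum_{\y}G_i+\tfrac12\sum_{\y}G_j$, each of which is $\exp\bigl(-\Eo(\rho_i,\Pch,\Px_i)+\EsI(\rho_i,\Ps)\bigr)$ with its own free $\rho_i=\frac{1-s_i}{s_i}$. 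Counting how often each class is charged (once on the diagonal, once plus one half on each off-diagonal pair) yields exactly $\hfunc=\frac{3N_k-1}{2}$; your accounting would give a larger constant, which is harmless for the exponent but does not match the stated bound. To repair your proof you would need either to adopt this factorization trick or to exhibit an explicit $\gamma(\y)$ for which all the per-class exponents come out right --- the latter is precisely what is not available.
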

\begin{proof} See Appendix \ref{Main_theorem_proof}.
\end{proof}

Theorem \ref{Main_theorem} holds for general (not necessarily memoryless)
discrete sources and channels, and for $\Px_{i}$, $i=1,\ldots,N_k$, being
non-product distributions (including cost-constrained and fixed composition
ensembles).
Furthermore, it naturally extends to continuous channels by following the
same arguments as those extending Gallager's exponent for channel coding.  
In particular, it can be generalized beyond the scope of \cite{Zhong07} and
\cite{Zhong09}, where Markovian sources and Gaussian channels were studied,
respectively.

It was demonstrated in~\cite{jsccisit12} that an application of Theorem~\ref{Main_theorem} to a partition where classes are identified with source-type classes attains $\EJCs$.
However, compared to the bound used to derive {\Csis}'s exponent in~\cite{Csis80},
Theorem~\ref{Main_theorem} provides a tighter bound on the average error
probability for finite values of $k$ and $n$~\cite{jsccciss2012}.
Along different lines, Theorem~\ref{Main_theorem} can be generalized to derive
\Csis's lower bound on the error exponent for lossy source-channel coding~\cite{Csis82}.

For a single class with associated distribution $\px$,
Theorem~\ref{Main_theorem} simply recovers the exponent in~\eqref{Gallager_exponent_Px}.
The following theorem shows that the exponent may be improved by
considering a partition with two classes.
\begin{thm}\label{thm:Cs2classes}
For a pair of distributions $\{\px, \px'\}$, there exists a partition
of the source message set into two classes such that the following
exponent is achievable
\begin{equation} \label{eqn:thmCs2classes}
\max_{\rho\in[0,1]}\,\Bigl\{\bar{\Eo}\bigl(\rho,\pch,\{\px, \px'\}\bigr) -t \Es(\rho,\ps)\Bigr\}.
\end{equation}
Moreover, a partition achieving this exponent is given by
 \begin{align}
   \Acal_k^{(1)}(\gamma)&\triangleq \left\{ \v: \; \Ps(\v) < \gamma^k \right\}\label{eqn:A1def}\\
   \Acal_k^{(2)}(\gamma)&\triangleq \left\{ \v: \; \Ps(\v) \geq \gamma^k \right\},\label{eqn:A2def}
\end{align}
for some $\gamma\in[0,1]$ with associated distributions
$\px_{i} \in \bigl\{\px, \px'\bigr\}$, $i=1,2$.
\end{thm}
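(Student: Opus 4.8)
The plan is to instantiate Theorem~\ref{Main_theorem} with $N_k=2$ and the threshold partition \eqref{eqn:A1def}--\eqref{eqn:A2def}, and then to optimize over the threshold $\gamma$ and over the assignment of $\{\px,\px'\}$ to the two classes. Since $N_k=2$, the prefactor $\hfunc=\tfrac52$ is a constant and is absorbed in the $o(n)$ term; using $\Eo(\rho,\Pch,\Px_i)=n\,\Eo(\rho,\pch,\px_i)$ for product inputs and $k/n\to t$, the exponent guaranteed by \eqref{main_error_bound} is $\min_{i\in\{1,2\}}\max_{\rho_i\in[0,1]}\bigl\{\Eo(\rho_i,\pch,\px_i)-t\,s_i(\rho_i)\bigr\}$, where $s_i(\rho)\triangleq\lim_{k\to\infty}\frac1k\EsI(\rho,\Ps)$ is the per-symbol limit of the $i$-th partial source function. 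It therefore suffices to upper bound $s_1,s_2$ and to pick $\gamma\in[0,1]$ and the assignment $\px_1,\px_2\in\{\px,\px'\}$ so that both inner maximizations are at least \eqref{eqn:thmCs2classes}.

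To bound the partial source functions I would use only Markov-type inequalities, so that no matching (method-of-types) lower bound is needed. For class~$2$, from $\Ps(\v)\ge\gamma^k$ one has $\mathbb{1}\{\Ps(\v)\ge\gamma^k\}\le(\Ps(\v)/\gamma^k)^{\nu}$ for every $\nu\ge0$, whence $\sum_{\v\in\Acal_k^{(2)}(\gamma)}\Ps(\v)^{\frac1{1+\rho}}\le\gamma^{-k\nu}\sum_{\v}\Ps(\v)^{\frac1{1+\rho}+\nu}$; writing $\frac1{1+\rho}+\nu=\frac1{1+\rho''}$, recognizing $\sum_{\v}\Ps(\v)^{\frac1{1+\rho''}}=e^{k\Es(\rho'',\ps)/(1+\rho'')}$, and setting $r\triangleq-\log\gamma$, one gets $s_2(\rho)\le\min_{\rho''\le\rho}\bigl\{\tfrac{1+\rho}{1+\rho''}(r+\Es(\rho'',\ps))-r\bigr\}$. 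The inner minimizer is the value $\rho_r$ solving $(1+\rho_r)\Es'(\rho_r,\ps)-\Es(\rho_r,\ps)=r$ when $\rho\ge\rho_r$, and the boundary $\rho''=\rho$ otherwise; consequently $s_2(\rho)\le\Es(\rho,\ps)$ for $\rho\le\rho_r$ and $s_2(\rho)\le L_r(\rho)$ for $\rho>\rho_r$, where $L_r(\rho)\triangleq\Es(\rho_r,\ps)+\Es'(\rho_r,\ps)(\rho-\rho_r)$ is the supporting line of the convex function $\Es(\cdot,\ps)$ at $\rho_r$. The complementary inequality $\mathbb{1}\{\Ps(\v)<\gamma^k\}\le(\gamma^k/\Ps(\v))^{\mu}$, $\mu\ge0$, yields the mirror statement for class~$1$: $s_1(\rho)\le L_r(\rho)$ for $\rho\le\rho_r$ and $s_1(\rho)\le\Es(\rho,\ps)$ for $\rho>\rho_r$. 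Here I invoke the standard facts that $\Es(\cdot,\ps)$ is convex and that $(1+\rho)\Es'(\rho,\ps)-\Es(\rho,\ps)$ is continuous and nondecreasing, so that $\rho_r$ exists for every admissible $r$.

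Next I would choose the threshold so that $L_r$ becomes tangent to $\Es$ exactly at the optimizer of \eqref{eqn:thmCs2classes}. Let $\rho^\star$ attain that maximum and write, using the definition \eqref{eqn:barEodef} of the concave hull, $\barEo(\rho^\star,\pch,\{\px,\px'\})=\lambda\,\Eo(\rho_1^\star,\pch,\px_1)+(1-\lambda)\,\Eo(\rho_2^\star,\pch,\px_2)$ with $\rho^\star=\lambda\rho_1^\star+(1-\lambda)\rho_2^\star$, $\rho_1^\star\le\rho^\star\le\rho_2^\star$, and $\px_1,\px_2\in\{\px,\px'\}$ the two distributions whose $\Eo$-curves support the hull segment at $\rho_1^\star$ and $\rho_2^\star$; the fact that two support points suffice is Carath\'eodory's theorem on the line, which is the structural reason why two classes are enough. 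I then set $r=(1+\rho^\star)\Es'(\rho^\star,\ps)-\Es(\rho^\star,\ps)$ (so that $\gamma=e^{-r}\in[0,1]$, $\rho_r=\rho^\star$, and $L_r$ is the tangent of $\Es$ at $\rho^\star$), and assign $\px_1$ to $\Acal_k^{(1)}(\gamma)$ and $\px_2$ to $\Acal_k^{(2)}(\gamma)$. Evaluating the two inner maximizations at $\rho_1^\star$ and $\rho_2^\star$ and invoking $s_1(\rho_1^\star)\le L_r(\rho_1^\star)$, $s_2(\rho_2^\star)\le L_r(\rho_2^\star)$ gives lower bounds on the two class exponents; substituting the first-order condition for $\rho^\star$, namely that the slope $\tfrac{\Eo(\rho_2^\star,\pch,\px_2)-\Eo(\rho_1^\star,\pch,\px_1)}{\rho_2^\star-\rho_1^\star}$ of the hull segment equals $t\,\Es'(\rho^\star,\ps)$, both lower bounds collapse to \eqref{eqn:thmCs2classes}. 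The case in which $\rho^\star$ is a contact point with a single $\Eo$ curve is recovered as the degenerate $\rho_1^\star=\rho_2^\star=\rho^\star$, which reduces to the single-distribution exponent \eqref{Gallager_exponent_Px}.

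I expect the main obstacle to be the convex-analytic bookkeeping of the last step: a single scalar threshold $\gamma$ must drive \emph{both} class exponents up to the common value \eqref{eqn:thmCs2classes}. This succeeds because the supporting line $L_r$ at $\rho_r=\rho^\star$ is shared by the two classes, and because the tangency conditions that define the hull contact points $\rho_1^\star,\rho_2^\star$ coincide with the stationarity conditions of the inner maximizations once each $s_i$ is replaced by $L_r$. Secondary points needing care are the justification that the constant prefactor $\hfunc$ and the $o(n)$ terms do not affect the exponent, the interchange of limit and maximization in the definition of $s_i$, and the boundary situations ($\rho^\star\in\{0,1\}$, or $r$ outside the range of $(1+\rho)\Es'(\rho,\ps)-\Es(\rho,\ps)$), which are dispatched by continuity and by letting one class degenerate.
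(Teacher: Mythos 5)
Your proposal is correct and follows essentially the same route as the paper's proof: the threshold partition \refE{A1def}--\refE{A2def}, a Chernoff-type bound $\openone\{a\le b\}\le a^{-s}b^{s}$ on each partial source function yielding $\Es(\rho,\ps)$ on one side of a pivot $\rho_0$ and a straight line through $(\rho_0,\Es(\rho_0,\ps))$ on the other (this is exactly Lemma~\ref{lem:Esibound}), a choice of the threshold $\gamma$ that equalizes the two class exponents by matching the line's slope to the slope of the hull segment, and Carath\'eodory's theorem to identify the resulting convex combination with $\bar{\Eo}$. The only cosmetic difference is that the paper keeps $\rho_0$ and $\gamma'$ as free parameters and solves \refE{proof2_2classes_slopes} for the equalizing $\gamma_0$, whereas you fix the line to be the tangent of $\Es$ at $\rho^{\star}$ and invoke first-order optimality of $\rho^{\star}$ (handling degenerate and boundary cases separately), which amounts to the same choice.
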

\begin{proof}
See Appendix \ref{sec:Cs2classes_dem1}.
\end{proof}

In Theorem \ref{thm:Cs2classes} we considered a particular
pair of distributions $\{\px, \px'\}$.
A direct application of Carath\'eodory's theorem~\cite[Cor. 17.1.5]{convex-rockafellar}
shows that any point belonging to the graph of $\bar{\Eo}(\rho,\pch)$
can be expressed as a convex combination of two points belonging to
the graph of $\Eo(\rho,\pch)$.
Consequently, there exists a pair of distributions $\px, \px'$ such
that these two points also belong to the graph of $\Eo(\rho,\pch,\{\px, \px'\})$.
By optimizing the exponent \refE{thmCs2classes} over all possible
pairs of distributions $\{\px, \px'\}$, the following result follows.

\begin{cor}\label{cor:Cs2classes}
There exists a partition of the source message set into two
classes assigned to a pair of distributions such that
$\EJCs$ in \eqref{Concave_Hull_exponent}
is achievable.
\end{cor}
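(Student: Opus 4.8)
The plan is to combine Theorem~\ref{thm:Cs2classes} with the representation of $\EJCs$ in \eqref{Concave_Hull_exponent} and a Carath\'eodory-type argument that converts the concave-hull formula into a statement about a single pair of distributions. Recall that \eqref{Concave_Hull_exponent} expresses $\EJCs$ as $\max_{\rho\in[0,1]} \{\barEo(\rho,\pch) - t\Es(\rho,\ps)\}$, whereas Theorem~\ref{thm:Cs2classes} gives, for any fixed pair $\{\px,\px'\}$, the achievable exponent $\max_{\rho\in[0,1]} \{\bar\Eo(\rho,\pch,\{\px,\px'\}) - t\Es(\rho,\ps)\}$. Since $\Eo(\rho,\pch,\{\px,\px'\}) \leq \Eo(\rho,\pch)$ pointwise, the concave hull on the left is dominated by the one on the right, so every such pair yields an exponent no larger than $\EJCs$. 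The entire content of the corollary is therefore to exhibit a \emph{single} pair $\{\px,\px'\}$ for which the inequality is met with equality at the optimizing $\rho$.

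First I would fix $\rho^\star\in[0,1]$ to be the value attaining the maximum in \eqref{Concave_Hull_exponent}, so that $\EJCs = \barEo(\rho^\star,\pch) - t\Es(\rho^\star,\ps)$. By the definition of the concave hull as a pointwise supremum of convex combinations, the point $(\rho^\star, \barEo(\rho^\star,\pch))$ lies in the closed convex hull of the graph of $\Eo(\cdot,\pch)$; applying Carath\'eodory's theorem in the form cited in the excerpt, this point is a convex combination of at most two points $(\rho_1, \Eo(\rho_1,\pch))$ and $(\rho_2, \Eo(\rho_2,\pch))$ on that graph. Each of these two points is itself achieved by some maximizing input distribution: there exist $\px, \px'$ with $\Eo(\rho_1,\pch)=\Eo(\rho_1,\pch,\px)$ and $\Eo(\rho_2,\pch)=\Eo(\rho_2,\pch,\px')$. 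I would then observe that, for precisely this pair $\{\px,\px'\}$, the two chosen points belong to the graph of $\Eo(\cdot,\pch,\{\px,\px'\})$, and hence their convex combination is a valid lower bound in the definition \eqref{eqn:barEodef} of $\barEo(\rho^\star,\pch,\{\px,\px'\})$. This forces $\barEo(\rho^\star,\pch,\{\px,\px'\}) \geq \barEo(\rho^\star,\pch)$, and combined with the reverse (pointwise-domination) inequality it yields equality at $\rho^\star$.

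The proof then concludes by plugging this pair into Theorem~\ref{thm:Cs2classes}: the achievable exponent \eqref{eqn:thmCs2classes} for $\{\px,\px'\}$ is at least $\barEo(\rho^\star,\pch,\{\px,\px'\}) - t\Es(\rho^\star,\ps) = \barEo(\rho^\star,\pch) - t\Es(\rho^\star,\ps) = \EJCs$, and the partition \eqref{eqn:A1def}--\eqref{eqn:A2def} realizes it, establishing achievability of $\EJCs$ with two classes.

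The main obstacle I anticipate is the correct handling of the Carath\'eodory step at the boundary. The concave hull is defined on the compact interval $\rho\in[0,1]$, and I must ensure that the two extracted points $\rho_1,\rho_2$ again lie in $[0,1]$, so that both $\Eo(\rho_i,\pch)$ are genuinely attained by some input distribution (the maxima in $\Eo(\rho,\pch)=\max_{\px}\Eo(\rho,\pch,\px)$ exist because the simplex of input distributions is compact and $\Eo$ is continuous in $\px$). A secondary subtlety is the degenerate case $\rho_1=\rho_2$, where a single distribution suffices and one of $\px,\px'$ is redundant; this is harmless and simply corresponds to the one-class situation already covered by Gallager's exponent, but it should be acknowledged so that the reduction to two distributions is genuinely tight rather than merely sufficient. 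Neither difficulty is deep, so the argument is essentially a careful bookkeeping exercise layered on top of Theorem~\ref{thm:Cs2classes}.
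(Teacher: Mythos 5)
Your proposal is correct and follows essentially the same route as the paper: the paper also obtains the corollary by applying Carath\'eodory's theorem to write a point on the graph of $\barEo(\rho,\pch)$ as a convex combination of two points on the graph of $\Eo(\rho,\pch)$, identifying the corresponding pair of maximizing distributions $\{\px,\px'\}$, and then invoking Theorem~\ref{thm:Cs2classes}. Your additional care about the boundary of $[0,1]$, the existence of the maximizers, and the degenerate case $\rho_1=\rho_2$ is sound but does not change the substance of the argument.
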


In contrast to Csisz\'ar's original analysis \cite{Csis80},
where the number of classes used to attain the best exponent
was polynomial in $k$, Corollary~\ref{cor:Cs2classes}
shows that a two-class construction suffices to attain $\EJCs$
when the partition and associated distributions are appropriately
chosen.

\subsection{Ensemble Tightness}\label{sec:discussion}

We have studied the error probability of random-coding ensembles where different
codeword distributions are assigned to different subsets of source messages.
Since Section~\ref{Section:Results} only considers achievability results,
one may ask whether the weakness of Gallager's exponent is due to the
bounding technique or the construction itself. A partial answer to this question
can be given by studying the exact random-coding exponent, namely the exact
exponential decay of the error probability averaged over the  ensemble, which
we denote by $\bar{\epsilon}$.


\begin{thm}\label{thm:EnsembleTightness}
For any non-empty set $\Qc$ of probability distributions on $\mathcal{X}$,
consider a codebook ensemble for which the codewords associated to
source messages with type class $\Tcal_i$ are generated according to
a distribution $\Px_{i}(\x)= \prod_{j=1}^{n} \px_i (x_j)$ with
$\px_i \in \Qc$, $i=1, \dotsc, N'_k$, where $N'_k$ is the number
of source type classes.
The random-coding exponent of this ensemble is upper-bounded as
\begin{align} \label{thm_ub_eqn}
\limsup_{n\to \infty}-\frac{\log\bar{\epsilon}}{n}
  &\leq \max_{\rho\in[0,1]} \bigl\{ \bar{\Eo}(\rho, \pch, \Qc)-t \Es(\rho, \ps)\bigr\}.
\end{align}
\end{thm}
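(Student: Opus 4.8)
The plan is to prove this converse by lower-bounding the ensemble-averaged error probability $\bar\epsilon$ through its dominant error mechanism, namely confusion between two messages that share the same source type. First I would decompose the average error over the source type classes,
\[
\bar\epsilon \;=\; \sum_i \Pr\bigl\{\V\in\Tcal_i\bigr\}\,\Pr\bigl\{\text{error}\mid\V\in\Tcal_i\bigr\},
\]
and exploit the fact that all messages in a common class $\Tcal_i$ are equiprobable under $\Ps$. Consequently the MAP rule \eqref{eqn:MAPdec}, restricted to competitors inside $\Tcal_i$, reduces to maximum-likelihood decoding, and by construction these codewords are i.i.d.\ according to a \emph{single} product distribution $\Px_i$ with $\px_i\in\Qc$. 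Since discarding all competitors lying outside $\Tcal_i$ can only decrease the error, $\Pr\{\text{error}\mid\V\in\Tcal_i\}$ is lower bounded by the average ML error probability of an i.i.d.\ channel code with $|\Tcal_i|\doteq e^{nR_i}$ codewords drawn from $\px_i$, where $R_i=\frac1n\log|\Tcal_i|\to t\,H(\tau_i)$ and $\tau_i$ denotes the type of the class.

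The central tool would then be the exponential tightness of the random-coding bound for a fixed i.i.d.\ channel-coding ensemble: the ensemble-averaged ML error probability of the code above decays no faster than $e^{-n\Er(R_i,\px_i)+o(n)}$, with $\Er(R,\px_i)=\max_{\rho\in[0,1]}\bigl\{\Eo(\rho,\pch,\px_i)-\rho R\bigr\}$. Combining this with $\Pr\{\V\in\Tcal_i\}\doteq e^{-nt\,D(\tau_i\|\ps)}$ and retaining only the dominant type class gives
\[
\limsup_{n\to\infty}-\frac1n\log\bar\epsilon \;\leq\; \min_i\bigl\{t\,D(\tau_i\|\ps)+\Er(R_i,\px_i)\bigr\}.
\]
Because each $\px_i\in\Qc$, I may replace $\Er(R_i,\px_i)$ by $\Er(R_i,\Qc)=\max_{\rho\in[0,1]}\bigl\{\Eo(\rho,\pch,\Qc)-\rho R\bigr\}$, which only enlarges the bound and makes the channel term depend on the type solely through its entropy.

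To finish, I would pass to the continuum. As $k\to\infty$ the source types become dense in the simplex, so the minimization over classes becomes a minimization over all types; grouping types by entropy and using $e(R/t,\ps)=\min_{Q:\,H(Q)\geq R/t}D(Q\|\ps)$ together with the monotonicity of $\Er(\cdot,\Qc)$ yields the primal-domain bound $\min_R\bigl\{t\,e(R/t,\ps)+\Er(R,\Qc)\bigr\}$. I would then invoke the Fenchel-duality argument of Zhong \emph{et al.}\ \cite{Zhong06} that produced \eqref{Concave_Hull_exponent}, applied verbatim to the $\Qc$-restricted functions, to convert this expression into the claimed dual form $\max_{\rho\in[0,1]}\bigl\{\bar{\Eo}(\rho,\pch,\Qc)-t\,\Es(\rho,\ps)\bigr\}$. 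The concave hull appears precisely because the Legendre transform underlying $\Er(R,\Qc)$ sees $\Eo(\cdot,\pch,\Qc)$ only through its concave envelope on $[0,1]$.

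The main obstacle is the per-type-class converse in the second step: I must establish the tight i.i.d.-ensemble lower bound $e^{-n\Er(R_i,\px_i)+o(n)}$ \emph{uniformly} across a number of type classes that grows (polynomially) with $k$ and across rates $R_i$ that vary with the type, so that this polynomial count and the subexponential prefactors are absorbed into the $o(n)$ term without degrading the exponent. Controlling this uniformity, and checking that the discrete-to-continuum passage in the entropy variable is exponentially tight, is where the real work lies; once granted, the concluding duality step is a routine application of the machinery already cited.
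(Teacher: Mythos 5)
Your proposal is correct and takes essentially the same route as the paper's proof: lower-bounding $\bar{\epsilon}$ by restricting the competing messages to the same source type class, invoking the ensemble-tightness (exact random-coding exponent) of the resulting per-class i.i.d.\ channel-coding sub-ensemble, enlarging $\px_i$ to the maximization over $\Qc$, passing from types to the continuum, and finishing with the Fenchel-duality step that yields the concave hull $\bar{\Eo}(\rho,\pch,\Qc)$. The per-class converse and the uniformity over polynomially many classes that you flag as the main obstacle are precisely what the paper handles explicitly via the method of types and the lower bounds of Scarlett \emph{et al.}, with the subexponential terms tracked through $\delta_{k,n}$.
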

\begin{proof}
See Appendix \ref{proof_thm_ub}.
\end{proof}

When $\Qc$ contains only one distribution, the concavity of $\Eo(\rho, \pch, \px)$
as a function of $\rho$ shows that the RHS of \eqref{thm_ub_eqn} matches~\eqref{Gallager_exponent_Px}. In other words, if the
codebook is drawn according to only one distribution $\px$, then
$\EJG$ in \eqref{Gallager_exponent} cannot be improved, i.e.,
it is ensemble tight.

The ensemble considered in Theorem \ref{thm:Cs2classes} is a particular
case of that of Theorem \ref{thm:EnsembleTightness} with $|\Qc| = 2$.
Since the upper bound \eqref{thm_ub_eqn} and the lower bound \refE{thmCs2classes}
coincide for $\Qc = \{\px, \px'\}$, the error exponent \refE{thmCs2classes}
is also ensemble tight.
Furthermore, for any set with cardinality $\Qc$ with $|\Qc| > 2$,
we can always choose two distributions $\px$ and  $\px'$ belonging to
$\Qc$ such that \refE{thmCs2classes} equals the RHS of
\eqref{thm_ub_eqn}~\cite[Cor. 17.1.5]{convex-rockafellar}.
Therefore, the random-coding exponent of an ensemble
with an arbitrary number of classes can be attained by the
two-class partition proposed in Theorem \ref{thm:Cs2classes}.

Finally, it can be shown that Theorem \ref{thm:EnsembleTightness} holds
for finer partitions of the source message set, not necessarily corresponding
to source type classes. 
Since the RHS of~\eqref{thm_ub_eqn} coincides with $\EJCs$ when
$\mathcal{Q}$ is the set of all probability distributions on
$\mathcal{X}$, we conclude that the ensembles studied in this
work cannot improve \Csis's random-coding exponent, even when the
latter does not coincide with the sphere-packing exponent.

\begin{figure}[t]
  \begin{center}
  \includegraphics[width=0.9\columnwidth]{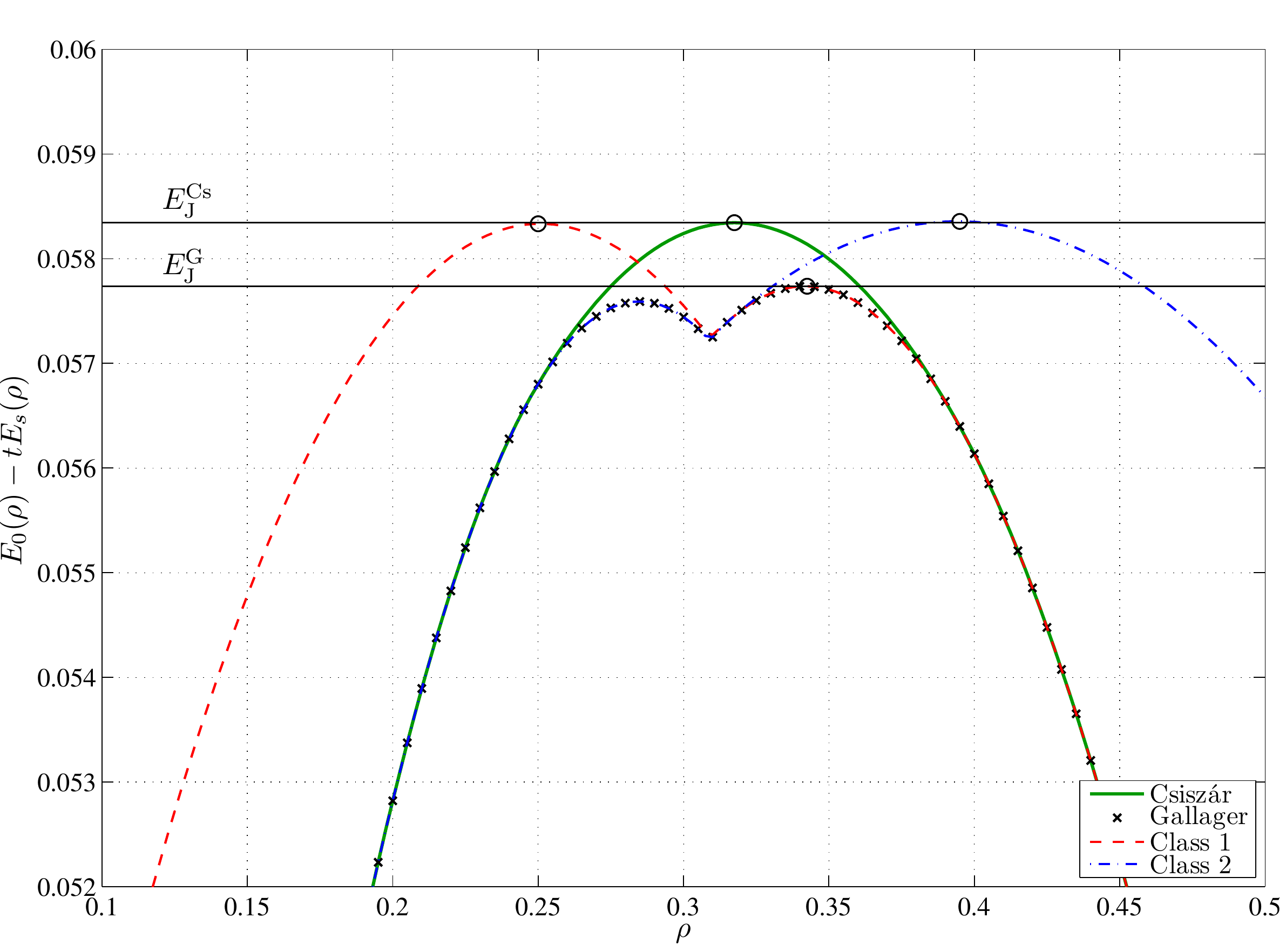}
	\vspace{-5mm}\caption{Error exponent bounds. Csisz\'ar's and Gallager's
	curves correspond to $\barEo(\rho, \pch) - t \Es(\rho, \ps)$ and $\Eo(\rho, \pch) - t \Es(\rho, \ps)$, respectively. Class $i$ curve correspond to $\Eo(\rho, \pch) \; - \stackrel[{n\to\infty}]{}{\lim} \frac{1}{n} \EsI(\rho, \Ps)$, for $i=1,2$.}
	\label{fig:exp_vs_rho}
  \end{center}
  \vspace{-5mm}
\end{figure}

\subsection{Example: a 6-input 4-output channel}\label{sec:example}
We present an example\footnote{In this subsection all logarithms and
exponentials are computed to base $2$. Hence all the information quantities
related to this example are expressed in bits.} in which the two-class partition (with their corresponding product distributions) attains the sphere-packing exponent
while Gallager's one-class assignment does not.
Consider the source-channel pair composed by a binary memoryless source (BMS)
and a non-symmetric memoryless channel with $|\Xc|=6$, $|\Yc|=4$ and
transition-probability matrix
\begin{align}\label{Example}
\pch =
   \left(\begin{array}{c c c c}
    1-3\xi_1 & \xi_1 & \xi_1 & \xi_1\\
    \xi_1 & 1-3\xi_1 & \xi_1 & \xi_1\\
    \xi_1 & \xi_1 & 1-3\xi_1 & \xi_1\\
    \xi_1 & \xi_1 & \xi_1 & 1-3\xi_1\\
    \frac{1}{2}-\xi_2 & \frac{1}{2}-\xi_2 & \xi_2 & \xi_2\\
    \xi_2 & \xi_2 & \frac{1}{2}-\xi_2 & \frac{1}{2}-\xi_2\\    
        \end{array} \right).
\end{align}
This channel is similar to the channel given in \cite[Fig. 5.6.5]{Gall68}
and studied in \cite{Zhong06} for source-channel coding.
It is composed of two quaternary-output sub-channels:
one of them is a quaternary-input symmetric channel with parameter
$\xi_1$, and the second one is a binary-input channel with parameter $\xi_2$.
We set $\xi_1=0.065$, $\xi_2=0.01$, $t=2$ and $\ps(1) = 0.028$.
It follows that the source entropy is $H(V)=0.1843$ bits/source symbol, the channel capacity is $C=0.9791$ bits/channel use and the critical rate is $\Rcr=0.4564$ bits/channel use.
Let $R^{\star}$ denote the value of $R$ minimizing \eqref{Csiszar_exponent}.
In this example we have $R^{\star}=0.6827>\Rcr$ and $\EJCs$ is tight.

In \refF{exp_vs_rho} we plot the objective functions of Gallager's exponent in
\eqref{Gallager_exponent} and \Csis's exponent in \eqref{Concave_Hull_exponent}
as functions of $\rho$, respectively.
For reference purposes, we also show the values of $\EJG$ and $\EJCs$
with horizontal solid lines. 
The distribution $\px$ maximizing $E_0(\rho,\pch,\px)$ changes from
$\bigl(\frac{1}{4}\; \frac{1}{4}\; \frac{1}{4}\; \frac{1}{4}\; 0\; 0\bigr)$
for $\rho \leq 0.31$ to $\bigl(0\; 0\; 0\; 0\; \frac{1}{2}\; \frac{1}{2}\bigr)$
for $\rho > 0.31$.
As a result, $E_0(\rho,\pch)$ is not concave in $\rho\in[0,1]$.
The figure shows how the non-concavity of Gallager's function around
the optimal $\rho$ of Csisz\'ar's function translates into a loss in
exponent. 

\refF{exp_vs_rho} also shows the bracketed terms 
in the RHS of~\eqref{main_error_bound} as a function of $\rho_i$
for the two-class partition of Theorem~\ref{thm:Cs2classes}.
The overall error exponent of the two-class construction is obtained
by first individually maximizing the exponent of each of the curves
over $\rho_i$, and by then choosing the minimum of the two individual
maxima.
In this example, the exponent of both classes coincides with $\EJCs$.
The overall exponent is thus given by $\EJCs$, which is in agreement
with Theorem \ref{thm:Cs2classes}.

\section{Conclusions}\label{Section:conclusions}

We have studied the error probability of random-coding ensembles where different
codeword distributions are assigned to different subsets of source messages.
We have showed that the random-coding exponent of ensembles generated with a
single distribution does not attain \Csis's exponent in general.
In contrast, ensembles with at most two appropriately chosen subsets
and distributions suffice to attain the sphere-packing exponent in
those cases where it is tight.
One of the strengths of our achievability result is that, unlike \Csis's
approach, it does not rely on the method of types. This leads to tighter
bounds on the average error probability for finite block lengths and
may simplify the task of generalizing our bound to source-channel systems
with non-discrete alphabets and memory.

%
%
%

\appendices

\section{Proof of Theorem \ref{Main_theorem}} \label{Main_theorem_proof}
%

Generalizing the proof of the random-coding union bound for channel coding
\cite[Th. 16]{Pol09} (with earlier precedents in \cite[pp.~136-137]{Gall68})
to the cases where codewords are independently generated according to
distributions that depend on the class index of the source, we obtain
%
\begin{IEEEeqnarray}{lCl}
\epsilon & \leq & \sum_{i=1}^{N_k} \sum_{\v\in \Acal_k^{(i)}} \Ps(\v)\sum_{\x,\y} \Px_{i}(\x) \Pch(\y|\x)
 \min\left\{1, \sum_{j=1}^{N_k} \sum_{\bar{\v}\in \Acal_k^{(j)}} \sum_{\substack{\bar{\x}: \Ps(\bar{\v})\Pch(\y|\bar{\x}) \\ \geq \Ps(\v)\Pch(\y|\x)}} \Px_{j}(\bar{\x})\right\}.\IEEEeqnarraynumspace \label{eqn:mt-proof-0}
\end{IEEEeqnarray}

We next use Markov's inequality for $s_j \geq 0$, $j=1,\ldots,N_k$, to obtain~\cite{Gall68}
\begin{equation}\label{eqn:mt-proof-1}
\sum_{\substack{\bar{\x}: \Ps(\bar{\v})\Pch(\y|\bar{\x}) \\ \geq \Ps(\v)\Pch(\y|\x)}} \Px_{j}(\bar{\x}) \leq \sum_{\bar{\x}} \Px_{j}(\bar{\x})
           \Biggl(\frac{\Ps(\bar{\v})\Pch(\y|\bar{\x})}
                {\Ps(\v)\Pch(\y|\x)}\Biggr)^{s_j}.
\end{equation}

Using \eqref{eqn:mt-proof-1} and the inequality $\min\{1,A+B\} \leq A^{\rho}+B^{\rho'}$, $A,B\geq 0$, $\rho,\rho' \in [0,1]$ \cite{Gall68}, \refE{mt-proof-0} is upper-bounded by
\begin{IEEEeqnarray}{lCl}
{\epsilon} & \leq & \sum_{i,j=1}^{N_k}
 \sum_{\v\in \Acal_k^{(i)}}
                 \Ps(\v)\sum_{\x,\y} \Px_{i}(\x) \Pch(\y|\x)
\notag\\& & \qquad \qquad  {} \times 
 \left(\sum_{\bar{\v}\in \Acal_k^{(j)}}
   \sum_{\bar{\x}} \Px_{j}(\bar{\x})
           \Biggl(\frac{\Ps(\bar{\v})\Pch(\y|\bar{\x})}
                {\Ps(\v)\Pch(\y|\x)}\Biggr)^{s_j}
\right)^{\rho_{ij}}, \IEEEeqnarraynumspace \label{eqn:cross-class-error-event-1}
\end{IEEEeqnarray}
where $\rho_{ij} \in [0,1]$ and $s_j \geq 0$, $i,j=1,\ldots,N_k$.

For $s_i,s_j\in \bigl[\frac{1}{2},1\bigr]$ and $\rho_{ij}=\frac{1-s_i}{s_j}$, \refE{cross-class-error-event-1} yields
\begin{IEEEeqnarray}{lCl}\label{eqn:Main_theorem_lemma}
\epsilon & \leq & \sum_{i,j=1}^{N_k} \sum_{\y} G_i(\y)^{s_i} G_j(\y)^{1-s_i}
\end{IEEEeqnarray}
where
\begin{IEEEeqnarray}{lCl}
G_i(\y) &\triangleq & \left( \sum_{\v\in \Acal_k^{(i)}} \Ps(\v)^{s_i}\right)^{\frac{1}{s_i}} \Biggr(\sum_{\x} \Px_{i}(\x) \Pch(\y|\x)^{s_i} \Biggr)^{\frac{1}{s_i}}. \IEEEeqnarraynumspace \label{eqn:gi-def}
\end{IEEEeqnarray}

This choice of $\rho_{ij}$ allows us to decompose the probability of
the ``inter-class'' error event between classes $i$ and $j$ as the
product of two terms corresponding to the ``intra-class'' error
events of each class.
The RHS of \refE{gi-def} is further upper-bounded by
\begin{IEEEeqnarray}{lCl}
\epsilon 
   &\leq & \sum_{i,j=1}^{N_k} \left( \sum_{\y} G_i(\y) \right)^{s_i} \left( \sum_{\y} G_j(\y) \right)^{1-s_i}
  \IEEEeqnarraynumspace \label{eqn:cross-class-error-event-4}\\
   &\leq & \sum_{i,j=1}^{N_k} \left( s_i \left( \sum_{\y} G_i(\y) \right) + (1-s_i) \left( \sum_{\y} G_j(\y) \right) \right)\label{eqn:cross-class-error-event-5}\\
   &\leq & \sum_{i=1}^{N_k} \sum_{\y} G_i(\y) + \sum_{\substack{i,j=1\\i\neq j}}^{N_k} \left( \sum_{\y} G_i(\y) + \frac{1}{2} \sum_{\y} G_j(\y)\right)
  \IEEEeqnarraynumspace \label{eqn:cross-class-error-event-6}\\
   &= & \frac{3N_k - 1}{2} \sum_{i=1}^{N_k} \sum_{\y} G_i(\y),
  \IEEEeqnarraynumspace \label{eqn:cross-class-error-event-7}
\end{IEEEeqnarray}
where in \refE{cross-class-error-event-4} we applied H\"older's
inequality $\|fg\| \leq \|f\|_{p} \|g\|_{q}$ with $p=\frac{1}{s_i}$ and
$q=\frac{1}{1-s_i}$;
\refE{cross-class-error-event-5} follows from the relation
between arithmetic and geometric means; 
and \refE{cross-class-error-event-6} follows because 
$\frac{1}{2} \leq s_i \leq 1$.
By identifying 
\begin{equation} 
\sum_{\y} G_i(\y) = \exp\Biggl( - \Eo\left(\frac{1-s_i}{s_i},\Pch,\Px_{i}\right) + \EsI\left(\frac{1-s_i}{s_i},\Ps\right) \Biggr) 
\end{equation} 
and optimizing over $\frac{1}{2} \leq s_i \leq 1$, $i=1,\ldots,N_k$,
it follows that
\begin{equation}
\epsilon \leq 
               \frac{3N_k - 1}{2}\sum_{i=1}^{N_{k}}
               \exp\biggl(-\max_{\rho_{i}\in[0,1]}\Bigl\{\Eo\bigl(\rho_i,\Pch,\Px_{i}\bigr)  -\EsI(\rho_i,\Ps)\Bigr\}\biggr),
\end{equation}
where we denote $\frac{1-s_i}{s_i}$ by $\rho_i$.
This concludes the proof.

\section{Proof of Theorem~\ref{thm:Cs2classes}}
\label{sec:Cs2classes_dem1}


%
The proof of the Theorem~\ref{thm:Cs2classes} is based on the next 
preliminary result. 

\begin{lem}\label{lem:Esibound} 
For any $\rho_0\in[0,1]$ and $\gamma'\geq 0$, the partition
\refE{A1def}-\refE{A2def} with $\gamma=\min\{1,\gamma'\}$
satisfies
\begin{align} 
    \frac{1}{k}\Es^{(1)}(\rho,\Ps)
      &\leq \Es(\rho,\ps) \openone\{ \rho > \rho_0 \}
        + r(\rho,\rho_0,\gamma') \openone\{ \rho \leq  \rho_0 \} 
          \triangleq \bar\Es^{(1)}(\rho,\rho_0,\gamma'),
    \label{eqn:Es1bound}\\
    \frac{1}{k}\Es^{(2)}(\rho,\Ps)
      &\leq \Es(\rho,\ps) \openone\{ \rho < \rho_0 \}
        + r(\rho,\rho_0,\gamma') \openone\{ \rho \geq  \rho_0 \}
          \triangleq \bar\Es^{(2)}(\rho,\rho_0,\gamma'),
    \label{eqn:Es2bound}
\end{align}
where $\openone\{\cdot\}$ denotes the indicator function, and where
\begin{align} 
  r(\rho,\rho_0,\gamma) \triangleq \Es(\rho_0,\ps) 
    + \frac{\Es(\rho_0,\ps) - \log\gamma}{1+\rho_0} \left( \rho-\rho_0 \right).
\label{eqn:rdef}
\end{align}
\end{lem}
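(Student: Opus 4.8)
The plan is to prove both inequalities \refE{Es1bound}--\refE{Es2bound} by combining two elementary bounds on $\Es^{(i)}(\rho,\Ps)$ and selecting whichever is operative on each side of $\rho_0$. The first is the trivial bound obtained by enlarging the sum defining $\Es^{(i)}$ from a subset to the whole alphabet: since $\Acal_k^{(i)}(\gamma)\subseteq\Va^k$ and $\Ps$ is a product distribution,
\begin{equation}
\Es^{(i)}(\rho,\Ps)\le\log\Bigl(\sum_{\v\in\Va^k}\Ps(\v)^{\frac{1}{1+\rho}}\Bigr)^{1+\rho}=k\,\Es(\rho,\ps),
\end{equation}
so that $\tfrac1k\Es^{(i)}(\rho,\Ps)\le\Es(\rho,\ps)$ for every $\rho$ and both classes. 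This already supplies the $\Es(\rho,\ps)$ terms, namely for $\rho>\rho_0$ in \refE{Es1bound} and for $\rho<\rho_0$ in \refE{Es2bound}.

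The second, refined bound is the crux and is where the class-defining threshold enters. The key step I would take is to factor the summand at the reference exponent $\rho_0$,
\begin{equation}
\Ps(\v)^{\frac{1}{1+\rho}}=\Ps(\v)^{\frac{1}{1+\rho_0}}\,\Ps(\v)^{\beta},\qquad \beta\triangleq\frac{\rho_0-\rho}{(1+\rho)(1+\rho_0)},
\end{equation}
and bound the residual factor $\Ps(\v)^{\beta}$ by means of class membership. For $\Acal_k^{(1)}(\gamma)$ with $\rho\le\rho_0$ we have $\beta\ge0$, so $\Ps(\v)<\gamma^{k}$ gives $\Ps(\v)^{\beta}\le\gamma^{k\beta}$; for $\Acal_k^{(2)}(\gamma)$ with $\rho\ge\rho_0$ we have $\beta\le0$, so $\Ps(\v)\ge\gamma^{k}$ together with $\gamma\le1$ again gives $\Ps(\v)^{\beta}\le\gamma^{k\beta}$. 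In both cases I would pull out $\gamma^{k\beta}$, extend the remaining sum $\sum_{\v\in\Acal_k^{(i)}}\Ps(\v)^{1/(1+\rho_0)}$ to the full alphabet, and use $\sum_{\v}\Ps(\v)^{1/(1+\rho_0)}=\exp\bigl(\tfrac{k}{1+\rho_0}\Es(\rho_0,\ps)\bigr)$. Taking $\tfrac{1+\rho}{k}\log(\cdot)$ and noting $(1+\rho)\beta=\tfrac{\rho_0-\rho}{1+\rho_0}$ then yields
\begin{equation}
\frac1k\Es^{(i)}(\rho,\Ps)\le\frac{(1+\rho)\Es(\rho_0,\ps)+(\rho_0-\rho)\log\gamma}{1+\rho_0}=r(\rho,\rho_0,\gamma),
\end{equation}
the last equality being a short rearrangement matching the definition \refE{rdef}.

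It then remains to replace $\gamma$ by $\gamma'$. Since $\gamma=\min\{1,\gamma'\}\le\gamma'$ and $r(\rho,\rho_0,\cdot)$ is affine in $\log\gamma$ with slope $\tfrac{\rho_0-\rho}{1+\rho_0}$, this slope is nonnegative precisely on the class-$1$ range $\rho\le\rho_0$, so $r(\rho,\rho_0,\gamma)\le r(\rho,\rho_0,\gamma')$ there and \refE{Es1bound} follows. On the class-$2$ range $\rho\ge\rho_0$ the slope is nonpositive, but the two thresholds coincide whenever $\gamma'\le1$, which is the only case in which the bound is genuinely needed: if $\gamma'>1$ then $\gamma=1$ and $\Acal_k^{(2)}(\gamma)\subseteq\{\v:\Ps(\v)=1\}$ is empty for any non-degenerate source, so $\Es^{(2)}(\rho,\Ps)=-\infty$ and \refE{Es2bound} holds trivially (degenerate sources being vacuous, with all Gallager functions identically zero). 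I expect the main obstacle to be exactly this $\gamma$-versus-$\gamma'$ bookkeeping: the slope $\tfrac{\rho_0-\rho}{1+\rho_0}$ changes sign at $\rho_0$, which is what forces the $r$-branch of each class onto opposite sides of $\rho_0$ and dictates that only the larger threshold $\gamma'$ may be safely inserted; the verification that the derived affine bound coincides with \refE{rdef} is then routine algebra.
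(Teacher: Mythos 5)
Your proposal is correct and follows essentially the same route as the paper's proof: the factorization $\Ps(\v)^{1/(1+\rho)}=\Ps(\v)^{1/(1+\rho_0)}\Ps(\v)^{\beta}$ with $\beta=\frac{\rho_0-\rho}{(1+\rho)(1+\rho_0)}$ bounded via class membership is exactly the paper's Markov-type bound $\openone\{a\leq b\}\leq a^{-s}b^{s}$ with the same choice $s=\max\bigl(0,\frac{\rho_0-\rho}{(1+\rho_0)(1+\rho)}\bigr)$, and the $s=0$ branch is your trivial enlargement of the sum. The only (cosmetic) difference is that you handle the $\gamma$-versus-$\gamma'$ bookkeeping at the end via monotonicity of $r$ in $\log\gamma$, whereas the paper absorbs it at the outset through the indicator inclusion $\openone\{\Ps(\v)<\gamma^k\}\leq\openone\{\Ps(\v)\leq(\gamma')^k\}$.
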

\begin{proof}
For the choice $\gamma=\min\{1,\gamma'\}$ it holds that
\begin{align}
\openone\left\{  \Ps(\v) < \gamma^k \right\} 
  \leq \openone\left\{  \Ps(\v) \leq \gamma^k \right\}
  =    \openone\left\{  \Ps(\v) \leq (\gamma')^k \right\}
  \label{eqn:Es1eq0}
\end{align}
since $\Ps(\v)\leq 1$ for all $\v$. Using \refE{Es1eq0} 
and the bound $\openone\{a \leq b\} \leq a^{-s} b^s$
for $s\geq0$, the function $\frac{1}{k}\Es^{(1)}(\rho,\Ps)$
can be upper-bounded as
\begin{align} 
  \frac{1}{k}\Es^{(1)}(\rho,\Ps)
  &\leq \frac{1}{k} \log \left( \sum_{\v} \Ps(\v)^{\frac{1}{1+\rho}}
                   \openone\left\{  \Ps(\v) \leq (\gamma')^k \right\} \right)^{1+\rho}
    \label{eqn:Es1eq2b}\\
  &\leq \frac{1}{k} \log \left( \sum_{\v} \Ps(\v)^{\frac{1}{1+\rho}} 
                   \Ps(\v)^{-s} (\gamma')^{ks} \right)^{1+\rho}
    \label{eqn:Es1eq3}\\
  &= \log \left(  \sum_{v} \ps(v)^{\frac{1}{1+\rho}-s}
                   (\gamma')^{s} \right)^{1+\rho},
    \label{eqn:Es1eq4}
\end{align}
for any $s \geq 0$. Here we used that $\Ps(\v)$ is memoryless.
We continue by choosing $s$ such that
\begin{align}
  s &= \max\left(0, \frac{\rho_0-\rho}{(1+\rho_0)(1+\rho)}\right).
    \label{eqn:sstar}
\end{align}
For $\rho >\rho_0$, it then follows that $s=0$, and
\refE{Es1eq4} gives (cf. \eqref{Gallager_exponent_PX_2})
\begin{align} 
  \frac{1}{k}\Es^{(1)}(\rho,\Ps) &\leq \Es(\rho,\ps).
  \label{eqn:psi1eq0}
\end{align}
For $\rho \leq  \rho_0$,  the choice \refE{sstar} yields $s=\frac{\rho_0-\rho}{(1+\rho_0)(1+\rho)}$, which together
with \refE{Es1eq4} yields
\begin{align} 
  \frac{1}{k}\Es^{(1)}(\rho,\Ps) &\leq
     (1+\rho)\log\left(\sum_{v} \ps(v)^{\frac{1}{1+\rho_0}}\right) 
                          - \frac{\rho-\rho_0}{1+\rho_0} \log\gamma'
\label{eqn:psi1eq2}\\
    &= (1+\rho_0)\log\left(\sum_{v} \ps(v)^{\frac{1}{1+\rho_0}}\right) 
      +(\rho-\rho_0)\log\left(\sum_{v} \ps(v)^{\frac{1}{1+\rho_0}}\right) 
                          - \frac{\rho-\rho_0}{1+\rho_0} \log\gamma'
\label{eqn:psi1eq3}\\
    &= \Es(\rho_0,\ps)
     + \frac{ \Es(\rho_0,\ps)
        - \log\gamma'}{1+\rho_0} \left( \rho-\rho_0 \right),
    \label{eqn:psi1eq4}
\end{align}
where in \refE{psi1eq3} we added and subtracted the term
$\rho_0 \log\left(\sum_{v} \ps(v)^{\frac{1}{1+\rho_0}}\right)$;
and \refE{psi1eq4} follows from the definition \eqref{Gallager_exponent_PX_2}.
The inequality \refE{Es1bound} follows by  combining \refE{psi1eq0} and \refE{psi1eq2}-\refE{psi1eq4} for
$\rho> \rho_0$ and $\rho\leq \rho_0$, respectively.

In an analogous way,  the inequality \refE{Es2bound} can be proved
using that $\openone\{  \Ps(\v) \geq \gamma^k \} = \openone\{  \Ps(\v) \geq (\gamma')^k \}$ and 
$\openone\{a \geq b\} \leq a^s b^{-s}$ with $s \geq 0$.
\end{proof} 

By applying Theorem~\ref{Main_theorem} to the two-class partition
\refE{A1def}-\refE{A2def} with associated product distributions
$\Px_i$, $i=1,2$, for the optimal threshold $\gamma$ we obtain
\begin{align}
  \EJB
  &\triangleq \max_{\gamma\in[0,1]} \left\{ \liminf_{n\to\infty}
    \left\{ - \frac{1}{n} \log \Biggl(  h(k)
\sum_{i=1,2} e^{- \stackrel[{\rho_i\in[0,1]}]{}{\max} \bigl\{n\Eo(\rho_i,\pch,Q_i) - \EsI(\rho_i) \bigr\} } \Biggr) \right\} \right\}\label{eqn:proof2_2classes0}\\
         &= \max_{\gamma \in[0,1]} \left\{ \liminf_{n\to \infty}
          \min_{i=1,2}  \left\{ \max_{\rho_i \in[0,1]}  \left\{ \Eo(\rho_i,\pch,Q_i) - \frac{1}{n} \EsI(\rho_i)\right\} \right\} \right\}\label{eqn:proof2_2classes1}\\
         &\geq \max_{\gamma'\geq 0} \max_{\rho_0,\rho_1,\rho_2 \in[0,1]}
               \min_{i=1,2}  \left\{
         \Eo(\rho_i,\pch,Q_i) - t \bar\Es^{(i)}(\rho_i,\rho_0,\gamma') \right\} \label{eqn:proof2_2classes2}\\
         &\geq \max_{\substack{\rho_0,\rho_1,\rho_2 \in[0,1]:\\\rho_1\leq\rho_0\leq\rho_2}}
               \max_{\gamma'\geq 0}
               \min_{i=1,2}  \left\{ \Eo(\rho_i,\pch,Q_i) - t \bar\Es^{(i)}(\rho_i,\rho_0,\gamma')  \right\}, \label{eqn:proof2_2classes3}
\end{align}
where \refE{proof2_2classes1} follows by noting that $h(k)$ is
subexponential in $k$; in \refE{proof2_2classes2} we have applied
Lemma \ref{lem:Esibound} with $\rho_0 \in [0,1]$ and $\gamma' \geq 0$
and have used that $\liminf_{n\to\infty} \max_x\{f_n(x)\} \geq \max_x\left \{ \lim_{n\to\infty} f_n(x)\right\}$ as long as $\lim_{n\to\infty} f_n(x)$
exists for every $x$;
and in \refE{proof2_2classes3} we have restricted the range over
which we maximize $\rho_i$, $i=0,1,2$ and interchanged the maximization
order.

By substituting \refE{Es1bound}-\refE{Es2bound} with
$0\leq\rho_1\leq\rho_0\leq\rho_2\leq 1$, the minimization in
\refE{proof2_2classes3} becomes
\begin{align}
 \min_{i=1,2} \left\{ 
           \Eo(\rho_i,\pch,Q_i) 
              + t\frac{\Es(\rho_0,\ps) - \log\gamma' }{1+\rho_0}
                 (\rho_0-\rho_i) - t \Es(\rho_0,\ps) \right\}.
         \label{eqn:proof2_2classes5}
\end{align}
We define $\gamma_0 \geq 0$ as the value satisfying
\begin{align}
  t\frac{\Es(\rho_0,\ps)-\log\gamma_0}{1+\rho_0} 
  = \frac{\Eo(\rho_2,\pch,Q_2)-\Eo(\rho_1,\pch,Q_1) }
             {\rho_2-\rho_1}.\label{eqn:proof2_2classes_slopes}
\end{align}
The existence of such $\gamma_0$ follows from the continuity of the logarithm
function.
Choosing $\gamma'=\gamma_0$ equalizes the two terms in the minimization in \refE{proof2_2classes5}, thus  maximizing the lower bound \refE{proof2_2classes3}.
As a result, substituting \refE{proof2_2classes5} into \refE{proof2_2classes3}
we obtain
\begin{align}
  \EJB
  &\geq 
   \max_{\rho_0\in[0,1]} \left\{ \max_{\substack{\rho_1,\rho_2 \in[0,1]:\\\rho_1\leq\rho_0\leq\rho_2}}\left\{
    \frac{\rho_2-\rho_0}
          {\rho_2-\rho_1} \Eo(\rho_1,\pch,Q_1)
   + \frac{\rho_0-\rho_1}
          {\rho_2-\rho_1} \Eo(\rho_2,\pch,Q_2) \right\}
   - t \Es(\rho_0,\ps) \right\}.
         \label{eqn:proof2_2classes7}
\end{align}

We now optimize the RHS of \refE{proof2_2classes7} over
the assignments $(Q_1,Q_2)=(Q,Q')$ and $(Q_1,Q_2)=(Q',Q)$.
By denoting by $\rho$ (resp. $\rho'$) the variable
$\rho_i$, $i=1,2$, associated to $Q$ (resp. $Q'$)
and defining $\lambda$ such that $\lambda\rho+(1-\lambda)\rho'=\rho_0$,
the optimal assignment leads to
\begin{align}
  \EJB
  &\geq 
   \max_{\rho_0\in[0,1]} \left\{
   \max_{\substack{\rho,\rho',\lambda\in[0,1]:\\ \lambda\rho+(1-\lambda)\rho'=\rho_0}}
   \Bigl\{
    \lambda \Eo(\rho,\pch,Q)
   + (1-\lambda) \Eo(\rho',\pch,Q') \Bigr\}
   - t \Es(\rho_0,\ps) \right\}.
         \label{eqn:proof2_2classes8}
\end{align}


Theorem~\ref{thm:Cs2classes} follows from \refE{proof2_2classes8}
by noting that~\cite[Th. 5.6]{convex-rockafellar}
\begin{align}
\bar\Eo(\rho_0,\pch, \{Q,Q'\})
  = \max_{\substack{\rho,\rho',\lambda\in[0,1]:\\ \lambda\rho+(1-\lambda)\rho'=\rho_0}}
    \Bigl\{ \lambda \Eo(\rho,\pch, Q) + (1-\lambda) \Eo(\rho',\pch, Q') \Bigr\}.
         \label{eqn:proof2_2classes9}
\end{align}

A two-class partition achieving the bound in Theorem~\ref{thm:Cs2classes}
is given by \refE{A1def}-\refE{A2def}, with $\gamma = \min(1,\gamma_0^{\star})$
where $\gamma_0^{\star}$ is computed from \refE{proof2_2classes_slopes}
for the values of $\rho_0^{\star}$, $\rho_1^{\star}$, $\rho_2^{\star}$
optimizing \refE{proof2_2classes7} and the assignment
$(Q_1^{\star},Q_2^{\star})$ which leads to \refE{proof2_2classes8}.

\section{Proof of Theorem~\ref{thm:EnsembleTightness}}\label{proof_thm_ub}



Before proving the result, we give some definitions that ease the exposition.
Let $\Acal$ be an arbitrary non-empty discrete set.  We denote the set of all probability distributions 
on  $\Acal$ by $\Dcal(\Acal)$ and the set of types in $\Acal^n$ by $\mathcal{D}_n(\Acal)$. We further denote by $\Tcal(\Psf_{XY})$ the type-class of sequences $(\x,\y)$ with joint type $\Psf_{XY}$.
The set $\Lcal_{n}(P_{XY})$ is given by
\begin{align}
\Lcal_{n}(P_{XY}) \triangleq \Bigl\{  \bar{\Psf}_{XY} \in \Dcal_n(\Xa\times \Ya): \bar{\Psf}_Y=P_Y, \EE\bigl[\log\pch(\bar{Y}|\bar{X})\bigr]  \geq \EE\bigl[\log\pch(Y|X)\bigr] \Bigr\}, \label{asymptotic_set}
\end{align}
where $(\bar{X},\bar{Y})\sim \bar{\Psf}_{XY}$ and $(X,Y)\sim P_{XY}$, and $P_Y$ denotes the marginal distribution of $P_{XY}$. 
Here, and throughout this appendix, we indicate that $\A$ is distributed according to the distribution $P_{\A}$ by writing $\A\sim P_{\A}$. 
Analogously, we define the set $\Lcal(P_{XY})$ as
\begin{align}
\Lcal(P_{XY}) &\triangleq \Bigl\{ \bar{P}_{XY} \in \Dcal(\Xa\times \Ya):
 \bar{P}_Y=P_Y, 
 \EE\bigl[\log\pch(\bar{Y}|\bar{X})\bigr]  \geq \EE\bigl[\log \pch(Y|X)\bigr]   \Bigr\}, \label{asymptotic_set2}
\end{align}
with $(\bar{X},\bar{Y})\sim \bar{P}_{XY}$ and $(X,Y)\sim P_{XY}$.

Extending~\cite[Th. 1]{Sca12} to source-channel coding, we find that
\begin{equation}
\bar{\epsilon} \geq \frac{1}{4}\sum_{i=1}^{N'_k}\sum_{\v\in \Tcal_i} \ps(\v)\EE \Biggl[\min\biggl\{1,\sum_{\bar{\v}\in  \Tcal_i}
\Pr\Bigl\{{\Ps(\bar{\v})\Pch(\Y| \bar{\X}_i) \geq \Ps(\v)\Pch(\Y|\X_i)}
\bigl |\X_i \Y\Bigr\} \biggr\}\Biggr],
\end{equation}
where $(\X_i,\Y) \sim \Px_i \times \Pch$ and $\bar{\X}_i \sim \Px_i$.
Here we have lower-bounded $\bar{\epsilon}$ by only considering in the inner
sum those $\bar{\v}$ that are in the source type class $\Tcal_i$, $i=1, \dotsc, N'_k$.

We rewrite this bound in terms of summations over types with
\begin{align}\label{dot_eq}
\bar{\epsilon} &\geq \frac{1}{4}\sum_{i=1}^{N'_k} \sum_{\Psf_{XY}}
  \Pr\left\{\V\in \Tcal_i\right\} \Pr\bigl\{(\X_i,\Y)\in \Tcal(\Psf_{XY}) \bigr\}
\notag\\
& \qquad\qquad\times  
\min\left\{ 1, \sum_{\bar{\Psf}_{XY}\in \Lcal_{n} (\Psf_{XY})} \bigl|\Tcal_i\bigr|\Pr\Bigl\{(\bar{\X}_i,\y)\in \Tcal(\bar{\Psf}_{XY})\;\big|\;\y\in \bar{\Psf}_Y\Bigr\} \right\}, \end{align}
where $\V \sim \Ps$.

Applying \cite[Lemma 2.3]{CsisKorn11} and \cite[Lemma 2.6]{CsisKorn11},
we obtain 
\begin{align}
 \bar{\epsilon}  
&\geq \sum_{i=1}^{N'_k} \sum_{\Psf_{XY}}   
\exp\Bigl(-k D(\Psf_i\| P)-n D(\Psf_{XY} \| Q_i\times \pch)+\delta'_{k,n} -\log 4\Bigr)\notag\\
& \qquad\qquad\times \min\left\{1,\sum_{\bar{\Psf}_{XY}\in \Lcal_{n} (\Psf_{XY})} \exp\Bigl(k H(V_i)-nD(\bar{\Psf}_{XY} \| Q_i\times \bar{\Psf}_Y)+\delta'_{k,n}\Bigr)\right\},\label{step_types_prop3}
\end{align}
where $V_i\sim \Psf_i$ and $\delta'_{k,n}\triangleq \log(k+1)^{-|\Va|}(n+1)^{-|\Xa||\Ya|}$.

The error probability can be further bounded by keeping only the leading exponential term in each summation in \eqref{step_types_prop3}. Taking logarithms on both sides of \eqref{step_types_prop3}, multiplying the result by $-\frac{1}{n}$, and using the notation $[x]^{+}=\max(x,0)$
we obtain 
\begin{align}
-\frac{\log\bar{\epsilon}}{n} 
&\leq \min_{i=1, \dotsc, N'_k}\min_{\Psf_{XY}} \min_{\bar{\Psf}_{XY}\in \Lcal_{n} (\Psf_{XY})} \Biggl\{ \frac{k}{n} D(\Psf_i\| P)+ D(\Psf_{XY} \| Q_i\times \pch)\notag\\
&\qquad\qquad\qquad\qquad\qquad\qquad\;\;\;\; +\left[ D(\bar{\Psf}_{XY} \| Q_i\times \bar{\Psf}_Y)-\frac{k}{n}H(V_i) \right]^{+}\Biggr\}-\frac{\delta_{k,n}}{n},\label{step_types_prop6}
\end{align}
where we define $\delta_{k,n}\triangleq 2\delta_{k,n}'+\log 4$. Here
we use that $[nx]^{+}=n[x]^{+}$, for $n>0$, 
that $[x]^+=\max(0,x)$ is monotonically non-decreasing,  and that $[x+a]^{+}\leq [x]^{+}+a$, $a>0$.

Any distribution in $\Dcal(\Acal)$ can be written as  the limit 
of a sequence of types in $\Dcal_n(\Acal)$~\cite[Sec. IV]{Csis98}.
Hence, the uniform continuity of  $D(A\|B)$ over the pair $(A,B)$ 
ensures that for every $P_{XY}$,
and every $\xi_1>0$, there exists a sufficiently large $n$ such that
\begin{align}
-\frac{\log\bar{\epsilon}}{n} 
& \leq \min_{i=1, \dotsc, N'_k} \min_{ P_{XY}}
\min_{\bar{P}_{XY} \in \Lcal(P_{XY})}
 \Biggl\{\frac{k}{n}D(\Psf_i\| P) + D(P_{XY} \| Q_i \times \pch)\notag\\ 
 & \qquad\qquad\qquad\qquad\qquad\qquad\quad+\left[D(\bar{P}_{XY} \| Q_i \times \bar{P}_Y)-\frac{k}{n}H(V_i)\right]^{+}\Biggr\}-\frac{\delta_{k,n}}{n}+\xi_1, \label{err_exp_iid_prelim1}
\end{align}
where  we have replaced $\Lcal_{n} (\Psf_{XY})$ by $\Lcal(P_{XY})$, and used that $[x+a]^{+}\leq [x]^{+}+a$, $a>0$.

It follows from \cite[Th. 4]{Sca12} that
\begin{multline}\label{eqn:jon_thm4}
\min_{ P_{XY}}
\min_{\bar{P}_{XY} \in \Lcal(P_{XY})}
 \left\{ D(P_{XY} \| \px\times \pch)+[D(\bar{P}_{XY} \| \px\times \bar{P}_Y)-R]^{+}\right\} \\ = 
\max_{\rho\in[0,1]} \bigl\{ \Eo(\rho,\pch,\px)-\rho R \bigr\},
\end{multline}
so  \eqref{err_exp_iid_prelim1} is equivalent to 
\begin{align}
-\frac{\log\bar{\epsilon}}{n} & \leq \min_{i=1, \dotsc, N'_k} 
 \Biggl\{\frac{k}{n}D(\Psf_i\| P) + 
   \max_{\rho\in[0,1]} \left\{ \Eo(\rho,\pch, Q_i)-\rho \frac{k}{n} H(V_i) \right\}
 \Biggr\} -\frac{\delta_{k,n}}{n}+\xi_1. \label{err_exp_iid_prelim12}
\end{align}

Maximizing \eqref{err_exp_iid_prelim12}  over $Q_i\in\mathcal{Q}$ for each $i=1,\ldots,N'_k$ yields
\begin{align}
-\frac{\log\bar{\epsilon}}{n} & \leq \min_{i=1, \dotsc, N'_k} 
 \Biggl\{\frac{k}{n}D(\Psf_i\| P) + 
   \max_{\rho\in[0,1]}  \left\{ \Eo(\rho,\pch,\Qc)-\rho \frac{k}{n} H(V_i) \right\}
 \Biggr\} -\frac{\delta_{k,n}}{n}+\xi_1. \label{err_exp_iid_prelim2}
\end{align}

By taking $n$ to be sufficiently large in the outer bracketed term of  \eqref{err_exp_iid_prelim2}, 
we obtain for $\xi_2>0$ that
\begin{align}
-\frac{\log\bar{\epsilon}}{n} & \leq \min_{i=1, \dotsc, N'_k} 
 \Biggl\{tD(\Psf_i\| P) + 
   \max_{\rho\in[0,1]}  \left\{ \Eo(\rho,\pch,\Qc)-\rho t H(V_i) \right\}
 \Biggr\} -\frac{\delta_{k,n}}{n}+\xi_1+\xi_2. \label{err_exp_iid_prelim2b}
\end{align}

Using now the uniform continuity of the RHS of \eqref{err_exp_iid_prelim2b}
as a function of  $\Psf_i$ \cite[p. 323]{Csis80}
and that any distribution in $\mathcal{D}(\Va)$ can be written as the limit of a sequence of source types in $k$,
 it follows that for every $\xi_3>0$ there exists a sufficiently large $n$ such that
\begin{align}
-\frac{\log\bar{\epsilon}}{n}
\leq & \min_{P'} \Biggl\{t D(P'\| P)+
\max_{\rho\in[0,1]} \left\{ \Eo(\rho,\pch,\Qc)-\rho t H(V') \right\} \Biggr\}-\frac{\delta_{k,n}}{n} +\xi_1+\xi_2+\xi_3,
\end{align}
where $V'\sim P'$ .
%
%
By taking the limit superior in $n$, 
this  becomes
\begin{align}
\limsup_{n\to \infty} 
-\frac{\log\bar{\epsilon}}{n}
\leq & \min_{P'} \Bigl\{tD(P'\| P)+
\max_{\rho\in[0,1]} \left\{ \Eo(\rho,\pch,\Qc)-\rho t H(V') \right\} \Bigr\}\label{quasi_last-iidN}+\xi_1+\xi_2+\xi_3\\
= & \min_{0\leq R \leq t \log |\Vcal|} \Bigl\{ te\left(\frac{R}{t}, \ps\right)+
\max_{\rho\in[0,1]} 
 \left\{ \Eo(\rho,\pch,\Qc)-\rho R \right\} \Bigr\} +\xi_1 +\xi_2+\xi_3 \label{last-iidN}
\\
= & \max_{\rho\in[0,1]} \bigl\{ \bar{\Eo}(\rho, \pch, \Qc)-t \Es(\rho, \ps)\bigr\}+\xi_1+\xi_2 +\xi_3,   \label{very_last-iidN}
\end{align}
where 
\eqref{last-iidN} follows from the definition of the source reliability function \cite[eq. (7)]{Csis80}
with $R=t H(V')$;
and \eqref{very_last-iidN} can be proved by the same methods that relate~\eqref{Csiszar_exponent} and~\eqref{Concave_Hull_exponent}.
Finally, letting $\xi_1$, $\xi_2$ and $\xi_3$ tend to
zero from above yields the desired result.

\bibliographystyle{IEEEtran}
\bibliography{bib/references-2}

\end{document}